\documentclass[11pt]{article}
\usepackage{graphicx}
\usepackage{hyperref}
\usepackage{geometry}
\usepackage{natbib}
\usepackage{color}
\geometry{margin=1in}
\usepackage{amsmath, amssymb, graphicx, booktabs}
\usepackage{url}
\usepackage{amsthm}
\newtheorem{theorem}{Theorem}
\newtheorem{remark}{Remark}

\title{Decomposing Global AUC into Cluster-Level Contributions for Localized Model Diagnostics}
\author{Agus Sudjianto$^{1,2}$ and Alice J. Liu$^3$\\
$^1$H2O.ai, \texttt{agus.sudjianto@h2o.ai} \\
$^2$Center for Trustworthy AI Through Model Risk Management, \\
University of North Carolina Charlotte\\
$^3$2nd Order Solutions, \texttt{alice.liu@2os.com}}
\date{}

\begin{document}

\maketitle

\begin{abstract}
The Area Under the ROC Curve (AUC) is a widely used performance metric for binary classifiers. However, as a global ranking statistic, the AUC aggregates model behavior over the entire dataset, masking localized weaknesses in specific subpopulations. In high-stakes applications such as credit approval and fraud detection, these weaknesses can lead to financial risk or operational failures. In this paper, we introduce a formal decomposition of global AUC into intra- and inter-cluster components. This allows practitioners to evaluate classifier performance within and across clusters of data, enabling granular diagnostics and subgroup analysis. We also compare the AUC with additive performance metrics such as the Brier score and log loss, which support decomposability and direct attribution. Our framework enhances model development and validation practice by providing additional insights to detect model weakness for model risk management.
\end{abstract}

\section{Introduction and Motivation}

The Area Under the Receiver Operating Characteristic Curve (AUC-ROC) is a widely used metric for evaluating binary classification models \cite{fawcett2006introduction}. It reflects the probability that a randomly selected positive instance will be ranked above a randomly selected negative one, offering a threshold-independent view of model discrimination performance. However, the AUC is a global statistic: it aggregates information from all positive-negative pairs across the entire dataset. As a result, it can obscure meaningful differences in model performance across specific subgroups or data segments.

In many real-world applications—particularly in high-stakes, regulated domains—such global performance measures are insufficient. Consider the case of \textbf{credit approval} models. A classifier may achieve high overall AUC, but fail to properly rank applicants from a thin file group (i.e., limited credit history) or income segment. This localized deficiency could result in systematic denial of credit to otherwise creditworthy individuals, with legal and reputational consequences for unfair lending practices. Evaluating subgroup-specific performance is crucial for fair and accountable machine learning systems \cite{barocas2019fairness}. When subgroup performance diverges, trade-offs between calibration, fairness, and AUC can arise \cite{kleinberg2016inherent}.

Similarly, in \textbf{fraud detection}, a global AUC of 0.95 may suggest excellent performance. Yet, within certain subpopulations—such as elderly customers using mobile banking or international transactions from high-risk countries—the model may perform poorly, leading to undetected fraud or a high rate of false positives that erode customer trust. These cluster-level failure modes are invisible under a global AUC assessment.

To address these limitations, we propose a framework for \textbf{cluster-level analysis of the AUC}. We treat the dataset as partitioned into meaningful clusters—either defined by domain knowledge (e.g., demographics, channels, regions) or learned via unsupervised methods—and decompose the global AUC into \textbf{intra-} and \textbf{inter-cluster} contributions. This decomposition allows us to evaluate not only how well the model separates positives from negatives within each cluster, but also how well it generalizes across clusters.

\section{Related Work and Contribution}

The AUC is a widely adopted metric in binary classification, with extensions for multiclass use, and is celebrated for its threshold-invariant measure of ranking performance. While the foundational properties of the AUC are well-established \cite{fawcett2006introduction, hanley1982meaning}, its decomposition into intra- and inter-cluster contributions has received limited attention in the supervised learning literature.

\paragraph{AUC and Pairwise Ranking}

The probabilistic interpretation of the AUC as the likelihood that a randomly selected positive instance ranks higher than a negative instance connects it to the Mann–Whitney U-statistic. This equivalence has enabled theoretical work on AUC consistency and optimization \cite{clemenccon2008ranking}. However, most of this work focuses on global properties or algorithmic aspects of AUC optimization, not diagnostics across subgroups.

\paragraph{AUC in Clustering Evaluation}

A parallel line of work uses AUC-like metrics to evaluate clustering performance. Jaskowiak et al.~\cite{jaskowiak2020area} introduced the AUC for Clustering (AUCC), which measures the extent of similarity among items that are grouped together. These metrics, while conceptually related to the AUC, are unsupervised and not concerned with model ranking across known class labels.

\paragraph{Subgroup and Localized AUC}

Recent work in model fairness and reliability has prompted research into \textbf{localized and subgroup-specific AUC analysis}. Carrington et al.~\cite{carrington2022deep} introduced \emph{Deep ROC Analysis}, which partitions the ROC curve and computes AUCs across different subgroups to identify performance disparities. These approaches align with our motivation: to understand AUC behavior across subpopulations. In contrast to these segment-level diagnostics, our framework offers a principled mathematical decomposition into intra- and inter-cluster AUC components.

Additionally, research on the \textbf{partial AUC} (pAUC) focuses on model evaluation over constrained operating regions (e.g., high specificity). While the pAUC is useful in applications like medical diagnostics or fraud detection, it does not provide insights into structured subgroup or cluster-level behavior \cite{dodd2003partial, mcclish1989analyzing}.

\paragraph{Our Contribution}

To our knowledge, this paper presents the first formal decomposition of global AUC into weighted contributions from intra- and inter-cluster AUCs. This enables:
\begin{itemize}
    \item Identification of subgroups where the model fails to discriminate effectively (via low intra-cluster AUC),
    \item Detection of systemic cross-cluster misranking (via low inter-cluster AUC), and
    \item Fine-grained attribution of performance contributions beyond global AUC.
\end{itemize}

Our work generalizes localized AUC analysis and extends it with a mathematically grounded decomposition, providing a robust framework for performance auditing, fairness diagnostics, and subgroup reliability analysis to uncover performance weaknesses hidden in global metrics.

\section{Decomposing Global AUC into Cluster-Level Contributions}

While the AUC provides a useful summary of model ranking performance, it is not an additive metric and can obscure substantial differences in model behavior across subgroups. Even when each individual cluster performs well internally, the global AUC can be degraded by inconsistent rankings between clusters. This motivates a formal decomposition of AUC into intra- and inter-cluster components.

\subsection{AUC as a Pairwise Ranking Statistic}
The AUC can be interpreted as the probability that a randomly chosen positive instance is ranked higher than a randomly chosen negative instance \cite{hanley1982meaning}.
Let $\mathcal{P}$ and $\mathcal{N}$ denote the sets of positive and negative instances, respectively. For a probabilistic classifier $\hat{f}(x)$, the AUC can be written as:

\begin{equation}
\text{AUC} = \mathbb{P}\left(\hat{f}(x^+) > \hat{f}(x^-)\right), \quad x^+ \in \mathcal{P},\ x^- \in \mathcal{N}
\end{equation}

This definition shows that the AUC is a \textit{pairwise} statistic, dependent on the joint ranking of predicted scores across all positive-negative pairs. It is not additive over disjoint subsets of data.

\subsection{Motivation and Non-Additivity}

Suppose the dataset is partitioned into $K$ disjoint clusters $C_1, \dots, C_K$, where each cluster contains both positive and negative instances. Let $\mathcal{P}_k$ and $\mathcal{N}_k$ denote the positive and negative samples in cluster $C_k$. AUC is fundamentally a pairwise statistic—defined over all possible positive-negative pairs. Therefore, computing AUC within each cluster and taking a weighted average does not recover the global AUC.

\begin{equation}
\text{AUC} \neq \sum_{k=1}^K w_k \cdot \text{AUC}_k, \quad w_k = \frac{\left|\mathcal{P}_k\right| \cdot \left|\mathcal{N}_k\right|}{\left|\mathcal{P}\right| \cdot \left|\mathcal{N}\right|}
\end{equation}

This is because global AUC includes \textit{cross-cluster} comparisons (e.g., $x^+ \in C_i$, $x^- \in C_j$ for $i \neq j$), which are not captured in within-cluster AUCs. Ignoring these inter-cluster contributions may lead to misleading conclusions.

\subsection{Formal Decomposition Theorem}

\begin{theorem}[Cluster Decomposition of Global AUC]
\label{auc_decomp}
Let $\mathcal{P} = \bigcup_{k=1}^K \mathcal{P}_k$ and $\mathcal{N} = \bigcup_{k=1}^K \mathcal{N}_k$. Then the global AUC satisfies:
\[
\mathrm{AUC}_{\mathrm{global}} = \sum_{i=1}^{K} \sum_{j=1}^{K} w_{ij} \cdot \mathrm{AUC}_{ij}
\]
where:
\begin{align*}
\mathrm{AUC}_{ij} &= \mathbb{P}\left(\hat{f}(x^+) > \hat{f}(x^-)\ |\ x^+ \in \mathcal{P}_i,\ x^- \in \mathcal{N}_j\right) \\
w_{ij} &= \frac{\left|\mathcal{P}_i\right| \cdot \left|\mathcal{N}_j\right|}{\left|\mathcal{P}\right| \cdot \left|\mathcal{N}\right|}
\end{align*}
\end{theorem}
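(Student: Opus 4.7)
The plan is to start from the empirical/combinatorial form of the global AUC (the Mann--Whitney U-statistic representation) and simply partition the double sum over positive-negative pairs according to the cluster each point belongs to. Concretely, I would first write
\[
\mathrm{AUC}_{\mathrm{global}} \;=\; \frac{1}{|\mathcal{P}|\,|\mathcal{N}|} \sum_{x^+ \in \mathcal{P}} \sum_{x^- \in \mathcal{N}} \mathbb{1}\!\left[\hat{f}(x^+) > \hat{f}(x^-)\right],
\]
which is the standard finite-sample realization of the probabilistic definition given in equation~(1), under uniform sampling of a positive and a negative instance.

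Next, since $\{\mathcal{P}_i\}_{i=1}^K$ partitions $\mathcal{P}$ and $\{\mathcal{N}_j\}_{j=1}^K$ partitions $\mathcal{N}$, every ordered pair $(x^+, x^-)$ belongs to exactly one block $\mathcal{P}_i \times \mathcal{N}_j$. I would therefore split the double sum into a quadruple sum indexed by $(i,j)$, multiply and divide by $|\mathcal{P}_i|\,|\mathcal{N}_j|$ inside each block, and recognize the inner average as precisely $\mathrm{AUC}_{ij}$ and the resulting prefactor as $w_{ij} = |\mathcal{P}_i|\,|\mathcal{N}_j|/(|\mathcal{P}|\,|\mathcal{N}|)$. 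This directly yields the claimed identity $\mathrm{AUC}_{\mathrm{global}} = \sum_{i,j} w_{ij} \mathrm{AUC}_{ij}$.

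As a sanity check I would note that $\sum_{i,j} w_{ij} = 1$, since $\sum_i |\mathcal{P}_i| = |\mathcal{P}|$ and $\sum_j |\mathcal{N}_j| = |\mathcal{N}|$, so the decomposition is a genuine convex combination of pairwise cluster AUCs. I would also remark that the same identity follows from the probabilistic viewpoint via the law of total probability: conditioning on the cluster memberships $(i,j)$ of the randomly drawn positive-negative pair gives $\mathbb{P}(\hat{f}(x^+) > \hat{f}(x^-)) = \sum_{i,j} \mathbb{P}(x^+\in\mathcal{P}_i, x^-\in\mathcal{N}_j)\,\mathrm{AUC}_{ij}$, with the joint probability factoring as $(|\mathcal{P}_i|/|\mathcal{P}|)(|\mathcal{N}_j|/|\mathcal{N}|) = w_{ij}$ by independence of the two uniform draws.

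Honestly there is no genuine obstacle here: the result is essentially a bookkeeping identity once AUC is written in its U-statistic form, and the only subtlety worth flagging is the tie-handling convention in the indicator (whether one uses $>$, $\ge$, or the standard $\mathbb{1}[\hat{f}(x^+)>\hat{f}(x^-)] + \tfrac{1}{2}\mathbb{1}[\hat{f}(x^+)=\hat{f}(x^-)]$); the decomposition is linear in the kernel and therefore holds verbatim under any of these choices, provided the same convention is used on both sides of the equation.
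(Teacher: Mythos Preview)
Your proposal is correct and follows essentially the same argument as the paper: write the global AUC in its empirical Mann--Whitney form, partition the double sum over $\mathcal{P}\times\mathcal{N}$ into blocks $\mathcal{P}_i\times\mathcal{N}_j$, and identify each block average as $\mathrm{AUC}_{ij}$ with weight $w_{ij}$. Your additional remarks (the $\sum_{i,j}w_{ij}=1$ sanity check, the law-of-total-probability reformulation, and the tie-handling comment) go slightly beyond what the paper spells out but are consistent with and complementary to its proof.
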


\begin{proof}
The AUC is defined as:
\[
\text{AUC} = \mathbb{P}\left(\hat{f}(x^+) > \hat{f}(x^-)\right), \quad x^+ \in \mathcal{P},\ x^- \in \mathcal{N}
\]
The set of all positive-negative pairs $\mathcal{P} \times \mathcal{N}$ can be partitioned by cluster membership:
\[
\mathcal{P} \times \mathcal{N} = \bigcup_{i=1}^{K} \bigcup_{j=1}^{K} \left(\mathcal{P}_i \times \mathcal{N}_j\right)
\]
The global AUC is defined as:
\[
\mathrm{AUC} = \frac{1}{\left|\mathcal{P}\right|\cdot\left|\mathcal{N}\right|} \sum_{x^+ \in \mathcal{P}} \sum_{x^- \in \mathcal{N}} \mathbf{1}\left\{\hat{f}(x^+) > \hat{f}(x^-)\right\}
\]
Partition the sums by cluster:
\[
= \frac{1}{\left|\mathcal{P}\right|\cdot\left|\mathcal{N}\right|} \sum_{i=1}^{K} \sum_{j=1}^{K} \sum_{x^+ \in \mathcal{P}_i} \sum_{x^- \in \mathcal{N}_j} \mathbf{1}\left\{\hat{f}(x^+) > \hat{f}(x^-)\right\}
\]
Define:
\[
\mathrm{AUC}_{ij} = \frac{1}{\left|\mathcal{P}_i\right|\cdot\left|\mathcal{N}_j\right|} \sum_{x^+ \in \mathcal{P}_i} \sum_{x^- \in \mathcal{N}_j} \mathbf{1}\left\{\hat{f}(x^+) > \hat{f}(x^-)\right\}
\]
and let:
\[
w_{ij} = \frac{\left|\mathcal{P}_i\right|\cdot\left|\mathcal{N}_j\right|}{\left|\mathcal{P}\right|\cdot\left|\mathcal{N}\right|}
\]
Then:
\[
\mathrm{AUC} = \sum_{i=1}^{K} \sum_{j=1}^{K} w_{ij} \cdot \mathrm{AUC}_{ij}
\]

Then the total probability can be expressed as:
\[
\text{AUC} = \sum_{i=1}^{K} \sum_{j=1}^{K} \frac{\left|\mathcal{P}_i\right| \cdot \left|\mathcal{N}_j\right|}{\left|\mathcal{P}\right| \cdot \left|\mathcal{N}\right|} \cdot \mathbb{P}\left(\hat{f}(x^+) > \hat{f}(x^-) \mid x^+ \in \mathcal{P}_i,\ x^- \in \mathcal{N}_j\right)
\]
\end{proof}

\begin{remark}
The diagonal terms $\text{AUC}_{ii}$ represent intra-cluster performance—how well the model separates positives and negatives within the same group. The off-diagonal terms $\text{AUC}_{ij}$ ($i \ne j$) reflect cross- or inter-cluster ranking alignment. Poor inter-cluster AUC values may indicate that one cluster's positives are consistently scored below another’s negatives, suggesting potential score miscalibration or bias.
\end{remark}

\subsection{Interpretation and Practical Use}

This decomposition allows us to isolate clusters where the model performs poorly either internally (intra-cluster AUC) or externally (inter- or cross-cluster ranking). Practitioners can use this framework to:
\begin{itemize}
    \item Identify localized ranking failures not visible in the global AUC,
    \item Diagnose fairness and calibration issues at the subgroup level, and
    \item Visualize performance via AUC matrices or heat maps across cluster pairs.
\end{itemize}

In the next section, we apply this decomposition to both toy data and real models trained on two case studies using the Taiwan Credit dataset and Real-Time Credit Card Fraud Detection dataset. We use these case studies to explore localized diagnostics using both the AUC and Brier score metrics.

\subsubsection{What Intra- and Inter-Cluster AUC Reveal Beyond Global AUC}

While the global AUC summarizes a model’s overall ability to rank positive examples ahead of negative ones, it compresses all pairwise information into a single scalar value. This aggregation obscures the structure of model performance across data subpopulations. The decomposition into intra- and inter-cluster AUCs exposes two critical dimensions of model behavior:

\begin{itemize}
    \item \textbf{Intra-cluster AUC} ($\text{AUC}_{ii}$): Measures the model's ability to rank within a single cluster. Low values may signal that the model is not discriminative within that group — perhaps due to weak features, poor representation, or bias.
    \item \textbf{Inter-cluster AUC} ($\text{AUC}_{ij}$, $i \ne j$): Measures how the model ranks examples across clusters. This evaluates cross-cluster calibration and relative score consistency. Low values may suggest that one cluster systematically receives higher or lower scores, regardless of true class.
\end{itemize}

\subsection{Illustrative Example: Toy Data}

To demonstrate the decomposition, consider a toy dataset with 6 samples divided into two clusters:

\begin{center}
\begin{tabular}{ccc}
\toprule
Cluster & Label & Predicted Score \\
\midrule
$C_1$ & 1 & 0.9 \\
$C_1$ & 1 & 0.8 \\
$C_1$ & 0 & 0.4 \\
$C_2$ & 1 & 0.6 \\
$C_2$ & 0 & 0.7 \\
$C_2$ & 0 & 0.3 \\
\bottomrule
\end{tabular}
\end{center}

There are 3 positives and 3 negatives overall. Compute:

\[
\begin{aligned}
w_{11} &= \frac{2 \cdot 1}{3 \cdot 3} = \frac{2}{9}, \quad
w_{12} = \frac{2 \cdot 2}{3 \cdot 3} = \frac{4}{9}, \\
w_{21} &= \frac{1 \cdot 1}{3 \cdot 3} = \frac{1}{9}, \quad
w_{22} = \frac{1 \cdot 2}{3 \cdot 3} = \frac{2}{9}
\end{aligned}
\]

\textbf{AUC Contributions}:
\[
\begin{aligned}
\mathrm{AUC}_{11} & = 1.0 \quad \text{(both $C_1$ positives $> C_1$ negative)} \\
\mathrm{AUC}_{12} & = 1.0 \quad \text{($C_1$ positives $> C_2$ negatives)} \\
\mathrm{AUC}_{21} & = 1.0 \quad \text{($C_2$ positive $> C_1$ negative)} \\
\mathrm{AUC}_{22} & = 0.5 \quad \text{($C_2$ positive $> 1$ of 2 $C_2$ negatives)} \\
\end{aligned}
\]

\textbf{Global AUC}:
\[
\mathrm{AUC} = \frac{2}{9}\cdot(1.0) + \frac{4}{9}\cdot(1.0) + \frac{1}{9}\cdot(1.0) + \frac{2}{9}\cdot(0.5) = \frac{8}{9} \approx 0.889
\]

\paragraph{Interpretation} While the global AUC appears high (0.889), the poor within-cluster performance of $C_2$ (\(\mathrm{AUC}_{22} = 0.5\)) indicates a local failure mode. This would be missed by global metrics but exposed by the decomposition.

This example illustrates the practical utility of the decomposition: understanding performance not just globally, but in terms of interactions between subpopulations—critical in regulated domains like credit scoring and fraud detection.

\subsection{Illustrative Example: Credit Risk Model Across Demographic Segments}

Suppose a credit risk model is applied to three clusters:
\begin{itemize}
    \item $C_1$: Young adults with limited credit history
    \item $C_2$: Middle-aged individuals with stable income
    \item $C_3$: Seniors with irregular cash flows
\end{itemize}

A global AUC of 0.90 might suggest strong performance, but decomposition reveals:

\[
\text{AUC}_{\text{global}} = 0.90 \quad \text{(aggregated over all clusters)}
\]

\[
\text{AUC matrix:}
\quad
\begin{bmatrix}
\text{AUC}_{11} & \text{AUC}_{12} & \text{AUC}_{13} \\
\text{AUC}_{21} & \text{AUC}_{22} & \text{AUC}_{23} \\
\text{AUC}_{31} & \text{AUC}_{32} & \text{AUC}_{33}
\end{bmatrix}
=
\begin{bmatrix}
0.62 & 0.81 & 0.79 \\
0.85 & 0.92 & 0.90 \\
0.60 & 0.84 & 0.70
\end{bmatrix}
\]

From this, we observe:
\begin{itemize}
    \item Poor intra-cluster AUC for $C_1$ and $C_3$ ($\text{AUC}_{11}=0.62$, $\text{AUC}_{33}=0.70$): suggests the model struggles to rank risk accurately within these subgroups.
    \item Low inter-cluster AUCs ($\text{AUC}_{31} = 0.60$): indicates that seniors ($C_3$) are often scored lower than young adult positives from $C_1$, potentially leading to unfair denial of credit.
    \item High intra-cluster and inter-cluster AUC for $C_2$: the model performs well for middle-aged individuals.
\end{itemize}

\paragraph{Actionable Insights}

This decomposition allows stakeholders to:
\begin{itemize}
    \item Identify subgroups with unreliable internal rankings (potential for feature engineering or retraining),
    \item Detect relative miscalibration across clusters (potential for \textit{post-hoc} score adjustment or fairness remediation), and
    \item Quantify the contribution of each cluster-pair to the global AUC.
\end{itemize}

\section{Additive Alternatives: Brier Score and Log Loss}

Unlike the AUC, both the Brier score and log loss are additive metrics that can be directly decomposed across clusters. A comprehensive comparison of AUC, log loss, and Brier score for probabilistic models was presented by Niculescu-Mizil and Caruana \cite{niculescu2005predicting}.

\subsection{Brier Score}
The Brier score, originally introduced for weather forecasting \cite{brier1950verification}, measures the squared error of predicted probabilities:

\begin{equation}
\text{Brier score} = \frac{1}{n} \sum_{i=1}^{n} \left(\hat{p}_i - y_i\right)^2
\end{equation}

It decomposes naturally by cluster:
\begin{equation}
\text{Brier}_{\text{global}} = \sum_{k=1}^{K} \frac{n_k}{n} \cdot \text{Brier}_{k}
\end{equation}

\subsection{Log Loss}

log loss (cross-entropy loss) measures the expected log-likelihood of the predicted probabilities:

\begin{equation}
\text{LogLoss} = -\frac{1}{n} \sum_{i=1}^{n} \left[y_i \log\left(\hat{p}_i\right) + \left(1 - y_i\right) \log\left(1 - \hat{p}_i\right)\right]
\end{equation}

It also decomposes across clusters:
\begin{equation}
\text{LogLoss}_{\text{global}} = \sum_{k=1}^{K} \frac{n_k}{n} \cdot \text{LogLoss}_k
\end{equation}

\subsection{Use Case Comparison: When to Use AUC, Brier Score, and Log Loss Together}

Each of the metrics—AUC, Brier score, and log loss—offers a distinct lens into model performance. Understanding how to interpret and use them in tandem can significantly enhance model diagnostics, particularly in clustered or segmented data scenarios.

\paragraph{AUC: Rank-Based Discrimination}

The AUC measures the model’s ability to assign higher scores to positives than negatives, regardless of absolute probability calibration. It is most appropriate when:
\begin{itemize}
    \item The model is used for ranking or prioritization (e.g., flagging high-risk credit applicants for further review).
    \item Thresholds will be selected later or vary by context.
    \item You want to assess separability, even when predicted probabilities are not well calibrated.
\end{itemize}
However, the AUC does not capture how close predicted scores are to true probabilities, and it is not additive—thus it hides local failures if the global score is high.

\paragraph{Brier Score: Calibration + Discrimination}

The Brier score penalizes both poor calibration and poor separation. It is useful when:
\begin{itemize}
    \item You want to assess the overall probabilistic accuracy of the model.
    \item Well-calibrated probabilities are required for decision-making (e.g., setting pricing, expected risk).
    \item You want a decomposable metric to monitor cluster-wise reliability.
\end{itemize}
Because it is quadratic, the Brier score is sensitive to how far predicted probabilities deviate from the truth, even when ranking is correct.

\paragraph{Log Loss: Likelihood-Sensitive Evaluation}

Log loss (or cross-entropy loss) emphasizes confidence in correct predictions. It is useful when:
\begin{itemize}
    \item You need to maximize the log-likelihood of the true labels under the predicted probabilities.
    \item You want to penalize overconfident but incorrect predictions heavily.
    \item The model is being used in downstream probabilistic decision-making (e.g., Bayesian pipelines, customer loan-to-value or LTV estimation).
\end{itemize}
Log loss, like the Brier score, is additive and decomposable by cluster. It is more aggressive than the Brier score in penalizing miscalibrated or overconfident predictions.

\paragraph{Using All Three Together}

Using AUC, Brier score, and log loss jointly provides a holistic view of model performance:
\begin{itemize}
    \item \textbf{AUC high, but Brier score/log loss poor:} The model ranks correctly but is poorly calibrated. Probabilities may need recalibration.
    \item \textbf{AUC low, Brier score/log loss moderate:} The model is well-calibrated but unable to discriminate between classes. This may indicate weak features or overlapping distributions. 
    \item \textbf{AUC and Brier score/log loss high in one cluster, low in another:} Reveals localized failure—global metrics are masking cluster-level bias or risk.
\end{itemize}

For cluster-level diagnostics:
\begin{itemize}
    \item Use AUC$_{ij}$ (from our decomposition) to assess whether the model separates positives from negatives both \textit{within} and \textit{across} clusters.
    \item Use the Brier score and log loss per cluster to measure calibration and confidence in each subgroup.
    \item Investigate clusters where the Brier score and log loss diverge from the AUC to identify mismatch between ranking ability and confidence.
\end{itemize}

\begin{table}[htbp]
\centering
\begin{tabular}{@{}lccc@{}}
\toprule
Metric        & Additive? & Interpretable? & Pairwise? \\
\midrule
AUC           & No         & Yes (ranking)     & Yes         \\
Brier score   & Yes         & Yes (calibration) & No         \\
Log Loss      & Yes         & Yes (likelihood)  & No         \\
\bottomrule
\end{tabular}
\caption{Comparison of performance metrics for binary classification}
\end{table}

The AUC is valuable for assessing ranking quality, while the Brier score and log loss provide additive diagnostics useful for calibration, segment-level reliability, and model debugging.

\section{Case Study}
\subsection{Case Study 1: Taiwan Credit Risk Dataset}

To demonstrate the practical relevance of our decomposition framework and compare models in a real-world setting, we use the \textit{Taiwan Credit Card Default Prediction Dataset}, a publicly available benchmark from the UCI Machine Learning Repository \cite{yeh2009comparisons}.

\subsubsection{Dataset Description}

The Taiwan Credit dataset contains records for 30,000 credit card clients in Taiwan, including demographic features (e.g., age, gender, education), payment history (e.g., bill amounts and repayments over six months), and a binary target indicating default in the next month. The goal is to build a classifier to predict whether a customer will default. This dataset is frequently used in credit risk modeling and benchmarking supervised learning algorithms.

\subsubsection{Models and Experimental Setup}

We evaluate three tree-based gradient boosting methods:
\begin{itemize}
    \item \textbf{LightGBM (LGBM)} \cite{ke2017lightgbm}
    \item \textbf{XGBoost (XGB)} \cite{chen2016xgboost}
    \item \textbf{CatBoost (CBoost)} \cite{dorogush2018catboost}
\end{itemize}

To focus on simple and interpretable models, we configure all algorithms to use \textbf{decision stumps} (i.e., tree depth = 1), using their default hyperparameters otherwise. Each model is trained on 80\% of the data and evaluated on the remaining 20\% test set. We report AUC, Brier score, and log loss for training and test data, and compute the generalization gap (train minus test) for each metric.

\subsubsection{Gradient Boosting with Decision Stumps}

Gradient Boosted Decision Trees (GBDTs) are ensemble methods that sequentially train weak learners (e.g., decision trees) to fit the negative gradient of a loss function. When the base learner is a decision stump (i.e., a tree of depth 1), the model becomes an additive model of simple rules based on single features. This setup resembles a generalized additive model (GAM), where each feature contributes independently to the final prediction \cite{sudjianto2024gamb}. Despite their simplicity, such models can be surprisingly effective and are highly interpretable.

\subsubsection{Model Performance Comparison and Discussion}

\begin{table}[ht]
\centering
\caption{Comparison of Metrics (AUC, Brier Score, Log Loss) Across Models}
\label{tab:model_comparison}
\begin{tabular}{l|ccc|ccc|ccc}
\toprule
\textbf{Set} & \multicolumn{3}{c|}{\textbf{LGBM}} & \multicolumn{3}{c|}{\textbf{XGB}} & \multicolumn{3}{c}{\textbf{CBoost}} \\
& AUC & Brier & LogLoss & AUC & Brier & LogLoss & AUC & Brier & LogLoss \\
\midrule
Train & 0.7759 & 0.1372 & 0.4372 & 0.7843 & 0.1355 & 0.4310 & 0.7851 & 0.1353 & 0.4305 \\
Test  & 0.7744 & 0.1328 & 0.4274 & 0.7792 & 0.1320 & 0.4237 & 0.7799 & 0.1319 & 0.4234 \\
GAP   & -0.0015 & -0.0044 & -0.0098 & -0.0051 & -0.0035 & -0.0073 & -0.0052 & -0.0034 & -0.0071 \\
\bottomrule
\end{tabular}
\end{table}

All three models achieve similar performance, with CatBoost slightly outperforming others in Brier score and log loss. The generalization gap is small across metrics, suggesting limited overfitting under the decision stump constraint. While the AUC remains relatively stable between train and test, log loss and the Brier score provide finer discrimination and more sensitivity to probability calibration.

This example illustrates how multiple metrics—especially additive ones like the Brier score and log loss—can reveal nuances in model generalization not visible from the AUC alone. This motivates our proposal to combine the global AUC with decomposed and localized metrics for robust model evaluation.

\subsubsection{Cluster-Level Diagnostic Insights from the AUC and Brier Score Visualizations}

Figures~\ref{fig:cluster_auc} and~\ref{fig:cluster_brier} visualize the per-cluster performance of all models, illustrating the complementary strengths of AUC and Brier score for diagnostic analysis. 

Figure~\ref{fig:cluster_auc} shows that while many clusters achieve relatively high AUCs (indicating strong within-cluster rank separation), certain clusters deviate noticeably. For example, a few clusters exhibit AUC values near or below 0.6—suggesting that within those subpopulations, the model struggles to rank positives above negatives. These local failures are easily overlooked when relying on the global AUC alone.

In contrast, Figure~\ref{fig:cluster_brier} reveals variability in probabilistic accuracy across clusters. Some clusters with reasonably high AUCs still have high Brier scores, indicating poor calibration—i.e., the model assigns extreme probabilities that do not align well with actual outcomes. Conversely, clusters with moderate AUCs but low Brier scores may be well-calibrated but lack discriminative features.

Together, these figures underscore the value of using both decomposition-based rank metrics (e.g., AUC) and additive error metrics (e.g., Brier score) to gain a fuller understanding of model behavior across subgroups. In practice, this enables targeted improvements—such as re-engineering features or applying post-processing calibration—to specific clusters without affecting global performance.

The cluster definitions and corresponding performance metrics shown in Figures~\ref{fig:cluster_auc} and~\ref{fig:cluster_brier} were generated using \textbf{MoDeVa.ai}—a model diagnostics and validation platform designed for interpretable and localized performance analysis. MoDeVa's cluster-based evaluation tools allow users to partition the input space into behaviorally coherent regions and compute decomposed metrics, such as intra-cluster AUC and per-cluster Brier scores, with minimal effort. For details, see the official user guide \cite{modeva2024}.

    \begin{figure}[htbp]
    \centering
    \begin{minipage}{0.63\textwidth}
        \centering
        \includegraphics[width=\textwidth]{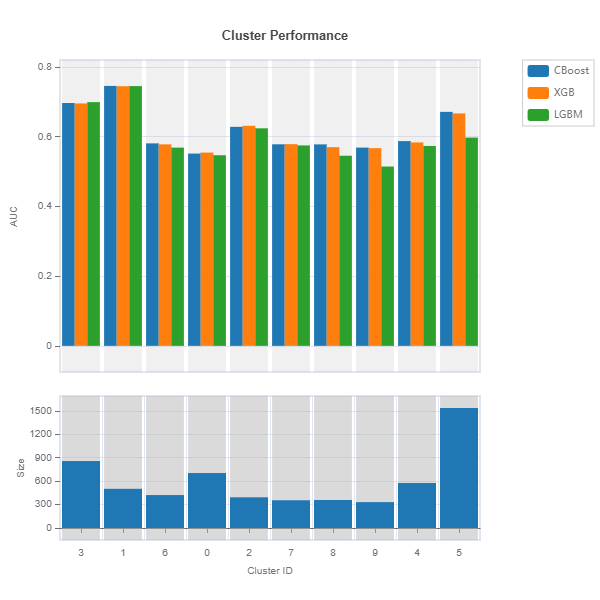}
        \caption{Clusters of inputs and their AUCs}
        \label{fig:cluster_auc}
    \end{minipage}\vfill
    \begin{minipage}{0.63\textwidth}
        \centering
        \includegraphics[width=\textwidth]{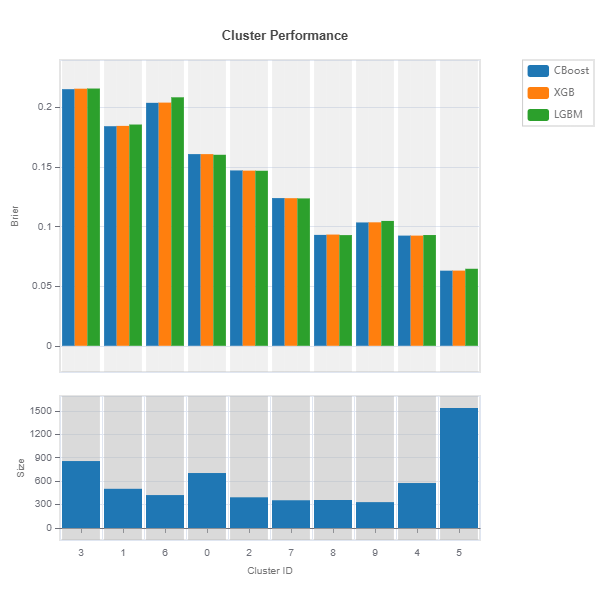}
        \caption{Clusters of inputs and their Brier scores}
        \label{fig:cluster_brier}
    \end{minipage}
    \end{figure}

\subsection{Case Study 2: Real-Time Credit Card Fraud Detection}
\label{sec:fraud_case}
Credit card fraud modeling is an archetypal \emph{needle-in-a-haystack} problem: the class of interest (fraud) occurs in fewer than \(2\%\) of transactions, yet misclassification can lead to direct financial loss or customer frustration from false declines. Moreover, fraud patterns evolve quickly, today’s high-risk merchant may be benign tomorrow. Raw features such as \texttt{merchant} and \texttt{mcc} (merchant category code) exhibit extreme cardinality. These characteristics make the domain ideal for demonstrating how cluster-level AUC and additive metrics surface model weaknesses that a single global statistic obscures.

\subsubsection{Data and Feature Engineering}
The dataset comprises
\(200{,}000\) synthetically generated transaction authorization records, emulating data collected by financial institutions. Each record contains the transformed transaction features for the model, as noted in Table \ref{tab:feat_eng}, along with binary outcome \texttt{fraud\_status}. Fraud prevalence is \(1.83\%\), indicating the highly imbalanced nature of this data.

The model relies on a compact set of \textbf{14} real-time features
(Table \ref{tab:feat_eng}), designed to capture temporal rhythm,
behavioral velocity, account and channel (i.e., card-present or card-not-present) status, and multi-granular merchant risk.

\begin{enumerate}
  \item \emph{Temporal rhythm.}  
        Hour of the day, day of the week, weekend flag, and a night-hours flag
        (\(00{:}00\) – \(05{:}59\)).
  \item \emph{Behavioral velocity.}  
        Seconds since the customer’s previous transaction, count of prior
        transactions in the past hour, and total spend in the past
        24 hours.
  \item \emph{Account \& channel status.}  
        Utilization ratio, account age, a \texttt{cvv\_match} indicator,
        and a Boolean \texttt{card\_present} flag
        (1 = card-present, 0 = card-not-present).
  \item \emph{Multi-granular merchant risk.}  
        Two separate Bayesian-smoothed fraud rates at two different levels of merchant granularity:  
        \begin{itemize}
            \item \texttt{merchant\_name\_risk} — learned at the exact
                  \texttt{merchant} level; and
            \item \texttt{mcc\_risk} — learned at the
                  \texttt{mcc} level. 
        \end{itemize}
        Using both levels of granularity lets the model exploit
        merchant-specific histories while still assigning a sensible
        prior to unfamiliar merchants that share an MCC with observed
        fraud.
\end{enumerate}

All fourteen features are computable with sub-millisecond latency and
require no external credit bureau data.

\begin{table}[ht]
\centering
\caption{Engineered real-time features used by the fraud model.}
\label{tab:feat_eng}
\begin{tabular}{l l}
\toprule
\textbf{Feature} & \textbf{Description} \\
\midrule
\(\mathsf{hour}\)                    & Hour of day (0–23) \\
\(\mathsf{day\_of\_week}\)           & Monday (0) … Sunday (6) \\
\(\mathsf{is\_weekend}\)             & 1 if Saturday or Sunday \\
\(\mathsf{is\_night}\)               & 1 if \(00{:}00\!\le t<\!06{:}00\) \\
\(\mathsf{secs\_since\_prev}\)       & Seconds since previous transaction (customer) \\
\(\mathsf{txns\_last\_1h}\)          & Count of prior transactions in past 1 hour \\
\(\mathsf{amt\_sum\_24h}\)           & Spend (USD) in past 24 hours (before transaction) \\
\(\mathsf{utilization}\)             & \(\texttt{transaction\_amount}/\texttt{credit\_limit}\) \\
\(\mathsf{acct\_open\_days}\)        & Days since account opening \\
\(\mathsf{cvv\_match}\)              & 1 if entered CVV matches stored CVV \\
\({\mathsf{card\_present}}\)          & 1 = card-present; 0 = card-not-present \\
\({\mathsf{merchant\_name\_risk}}\)   & Fraud-rate score at \texttt{merchant} granularity \\
\({\mathsf{mcc\_risk}}\)             & Fraud-rate score at \texttt{mcc} granularity \\
\(\mathsf{fraud\_status}\)            & Binary target label (kept for training) \\
\bottomrule
\end{tabular}
\end{table}

\subsubsection{Model and Evaluation}
Because the transaction stream contains two high-cardinality categorical variables such as \texttt{merchant} \((\sim\!2{,}400\) unique values) and \texttt{mcc}, we selected \textbf{CatBoost}, which handles categorical features via ordered target encoding and avoids the memory blow-up of one-hot representations.

After a coarse grid search we fixed the hyperparameters at

\begin{center}
\texttt{learning\_rate = 0.05,\;
max\_depth = 3,\;
n\_estimators = 500,\;
scale\_pos\_weight = 53.9639},
\end{center}

where the class weight (i.e., used for \texttt{scale\_pos\_weight}) equals the training-fold ratio of non-fraud to fraud observations. All other settings apply the default CatBoost parameters. The dataset was split chronologically: the first \(80\,\%\) of transactions for training and the final \(20\,\%\) for hold-out testing, preventing temporal leakage.

\begin{table}[ht]
\centering
\caption{CatBoost performance on the credit card fraud dataset
         (\(80{:}20\) time-based split).}
\label{tab:fraud_catboost_perf}
\begin{tabular}{lccc}
\toprule
\textbf{Split} & \textbf{AUC} & \textbf{Log Loss} & \textbf{Brier} \\
\midrule
Train & 0.7876 & 0.5659 & 0.2023 \\
Test  & 0.7713 & 0.5678 & 0.2032 \\
\midrule
Generalization gap & $-0.0163$ & $+0.0018$ & $+0.0009$ \\
\bottomrule
\end{tabular}
\end{table}

The CatBoost model achieves an AUC of \(0.771\) on the unseen test data while showing virtually no degradation in log loss or Brier scores between the training and testing data. The small generalization gaps among these three performance metrics provide evidence of regularization against over-fitting in a highly imbalanced setting.

Following decomposition outlined in Theorem \ref{auc_decomp}, the global total AUC of \(0.771\) decomposes to global intra- and inter-cluster AUC totals of \(0.120\) and \(0.651\), which can be used to provide cluster-level diagnostic insights.

\subsubsection{Cluster-Level Diagnostic Insights}
\label{sec:fraud_diag}

Figures \ref{fig:cluster_fraud_auc} and \ref{fig:cluster_fraud_brier} visualize the CatBoost model through the lens of our decomposition framework, further demonstrating the corresponding holistic use of the AUC and Brier score to identify critical areas within the data to leverage for model improvement and monitoring. This type of resilience analysis identifies variables that may be susceptible to drifts in distribution, motivating awareness of model performance. While not shown, log loss demonstrates similar results and insights provided by the Brier score. As with the Taiwan Credit dataset, clusters are generated using \textbf{MoDeVa.ai} and evaluated using MoDeVa's corresponding cluster analysis tools \cite{modeva2024}.

\paragraph{Intra- vs. inter-cluster ranking (Figures \ref{fig:cluster_fraud_auc} and \ref{fig:cluster_fraud_brier}).}
Figure \ref{fig:cluster_fraud_auc} shows that a few clusters achieve AUC values near or above 0.80, yet three clusters dip below 0.60, signaling weak intra- or within-cluster discrimination for those customer–merchant segments (see Figure \ref{fig:Density_auc}). The remaining clusters have more moderate AUCs that range between 0.6 to 0.8. The global test AUC of 0.771 masks these local defects and the variation in local performance, belying the dips in performance. Note two clusters have no calculated AUC, as these clusters only include non-fraud observations.

\paragraph{Calibration disparities (Figures \ref{fig:cluster_fraud_auc} and \ref{fig:cluster_fraud_brier}).}
Cluster 5 exhibits a respectable AUC but the second highest Brier score, indicating overconfident probability estimates and poor calibration. Conversely, cluster 1 exhibits the second lowest AUC as well as the second lowest Brier score, indicating a well-calibrated model is unable to discriminate between classes to a similar degree. Such divergence between rank quality and calibration would remain hidden without the additive view of the Brier score, as the global Brier score of 0.203 obscures the inconsistencies in calibration among clusters.

    \begin{figure}[htbp]
    \centering
    \begin{minipage}{0.48\textwidth}
        \centering
        \includegraphics[width=\textwidth]{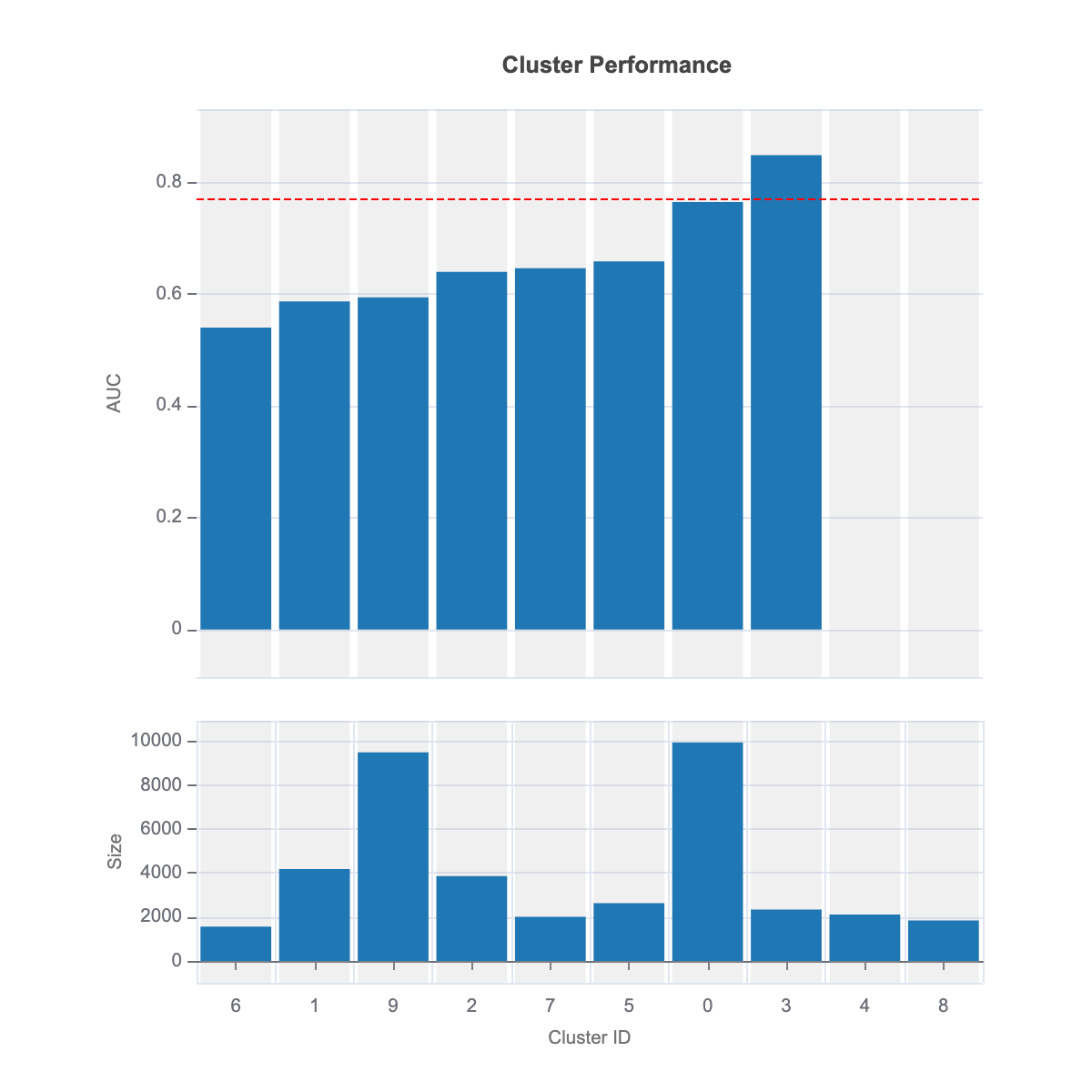}
        \caption{Clusters of inputs CatBoost fraud model and their AUCs.}
        \label{fig:cluster_fraud_auc}
    \end{minipage}\hfill
    \begin{minipage}{0.48\textwidth}
        \centering
        \includegraphics[width=\textwidth]{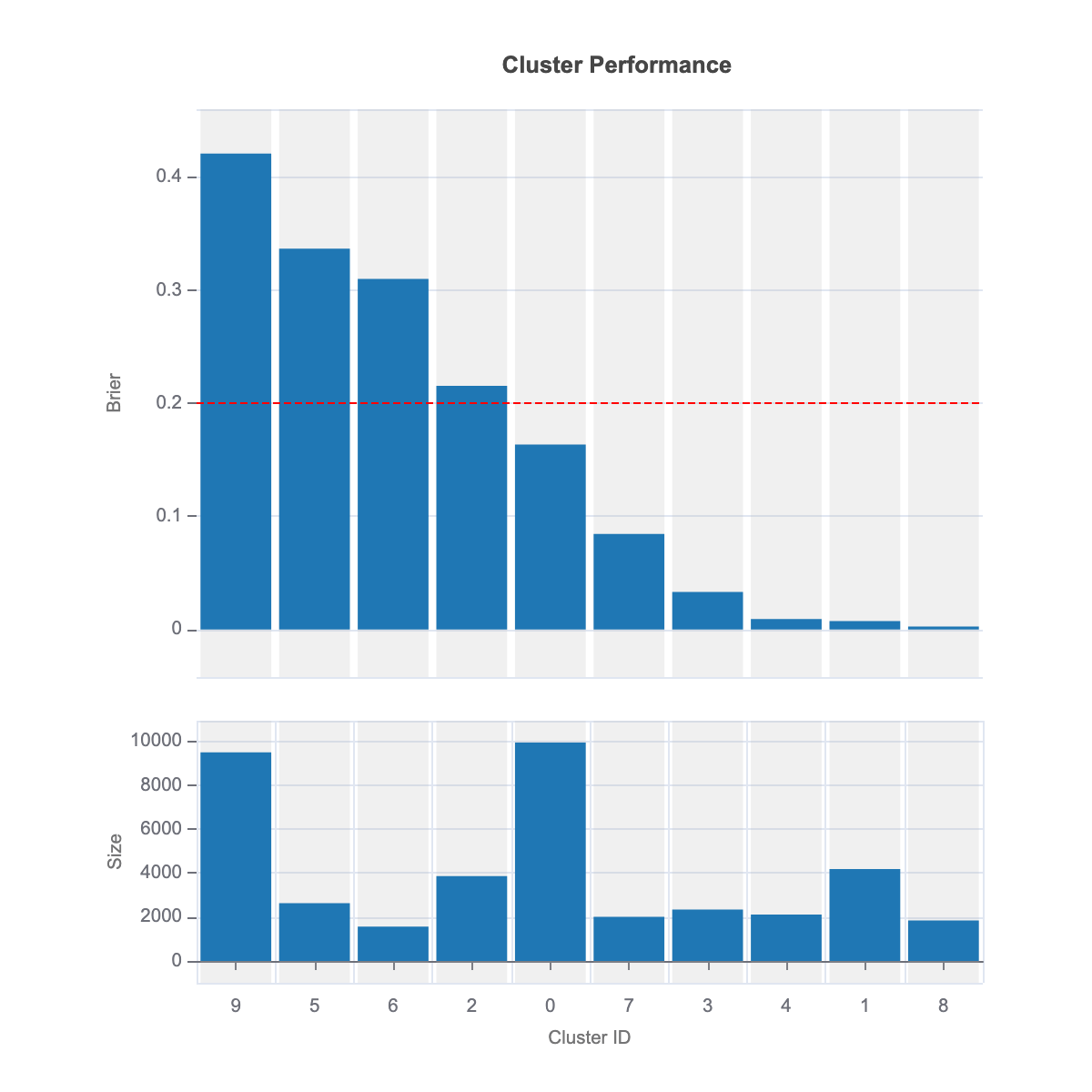}
        \caption{Clusters of inputs CatBoost fraud model and their Brier scores.}
        \label{fig:cluster_fraud_brier}
    \end{minipage}
    \end{figure}

    \begin{figure}[htbp]
    \centering
    \begin{minipage}{0.48\textwidth}
        \centering
        \includegraphics[width=\textwidth]{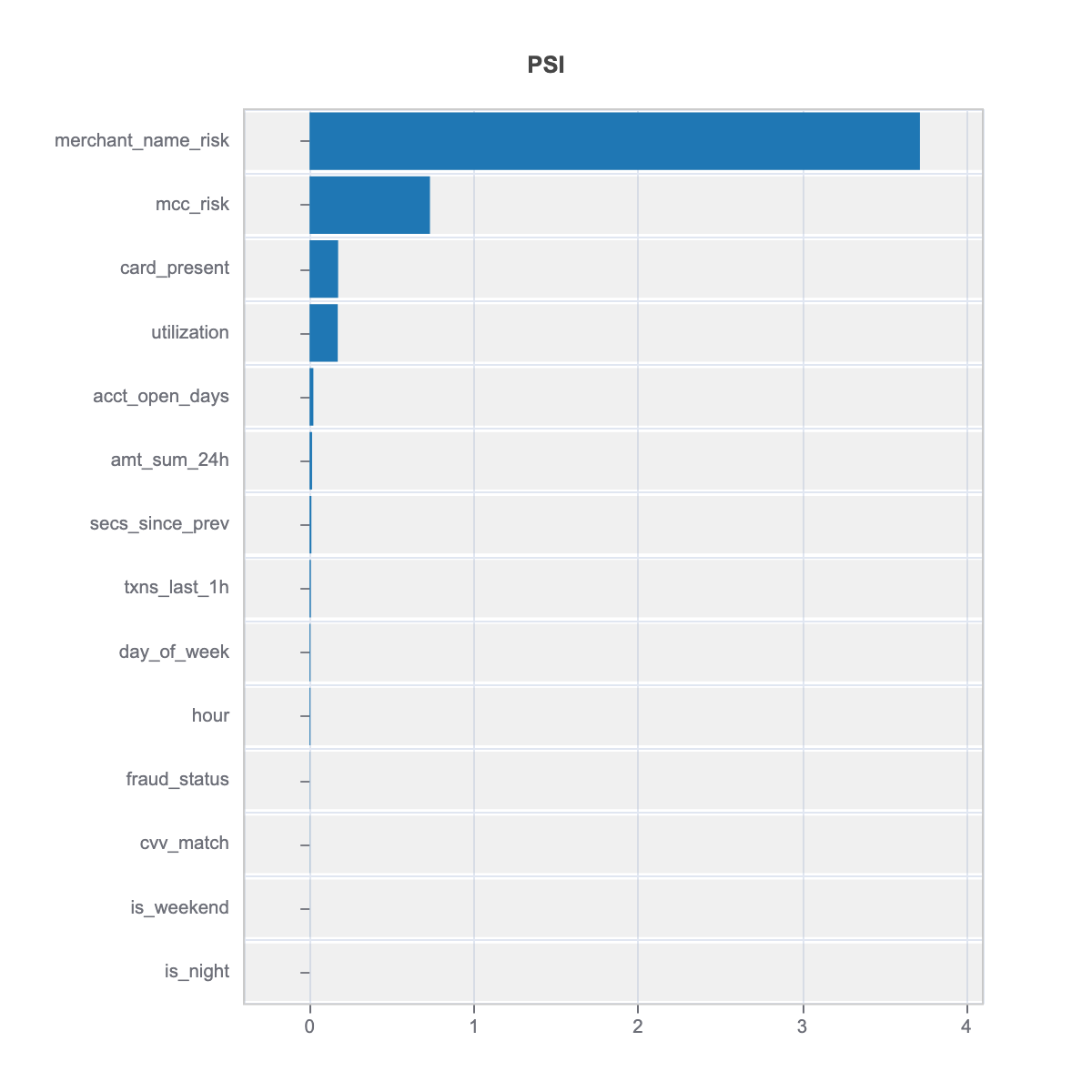}
        \caption{Feature marginal distribution distance of worst cluster to the rest of the clusters with respect to AUCs.}
        \label{fig:PSI_auc}
    \end{minipage}\hfill
    \begin{minipage}{0.48\textwidth}
        \centering
        \includegraphics[width=\textwidth]{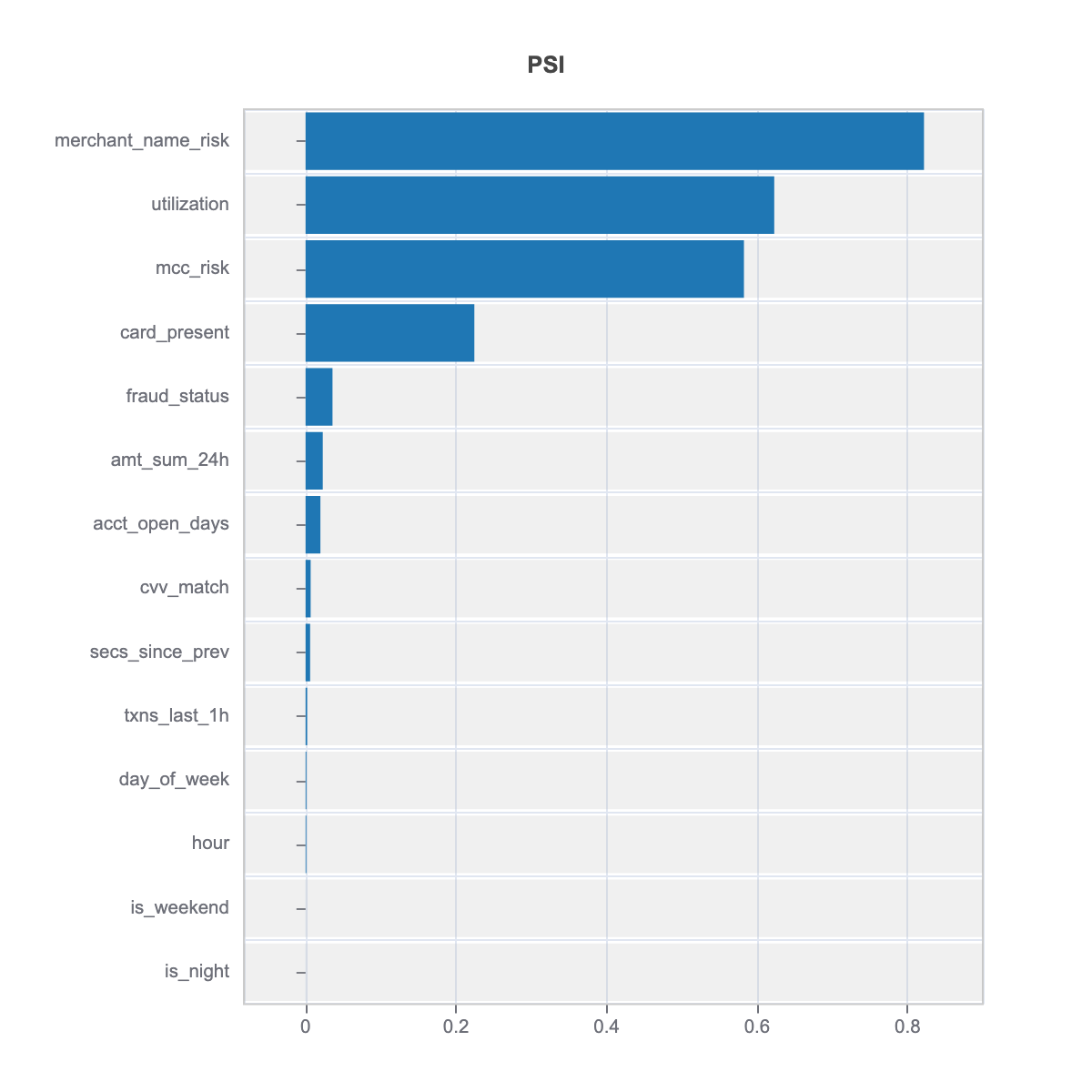}
        \caption{Feature marginal distribution distance of worst cluster to the rest of the clusters with respect to Brier scores.}
        \label{fig:PSI_brier}
    \end{minipage}
    \end{figure}

    \begin{figure}[htbp]
    \centering
    \begin{minipage}{0.48\textwidth}
        \centering
        \includegraphics[width=\textwidth]{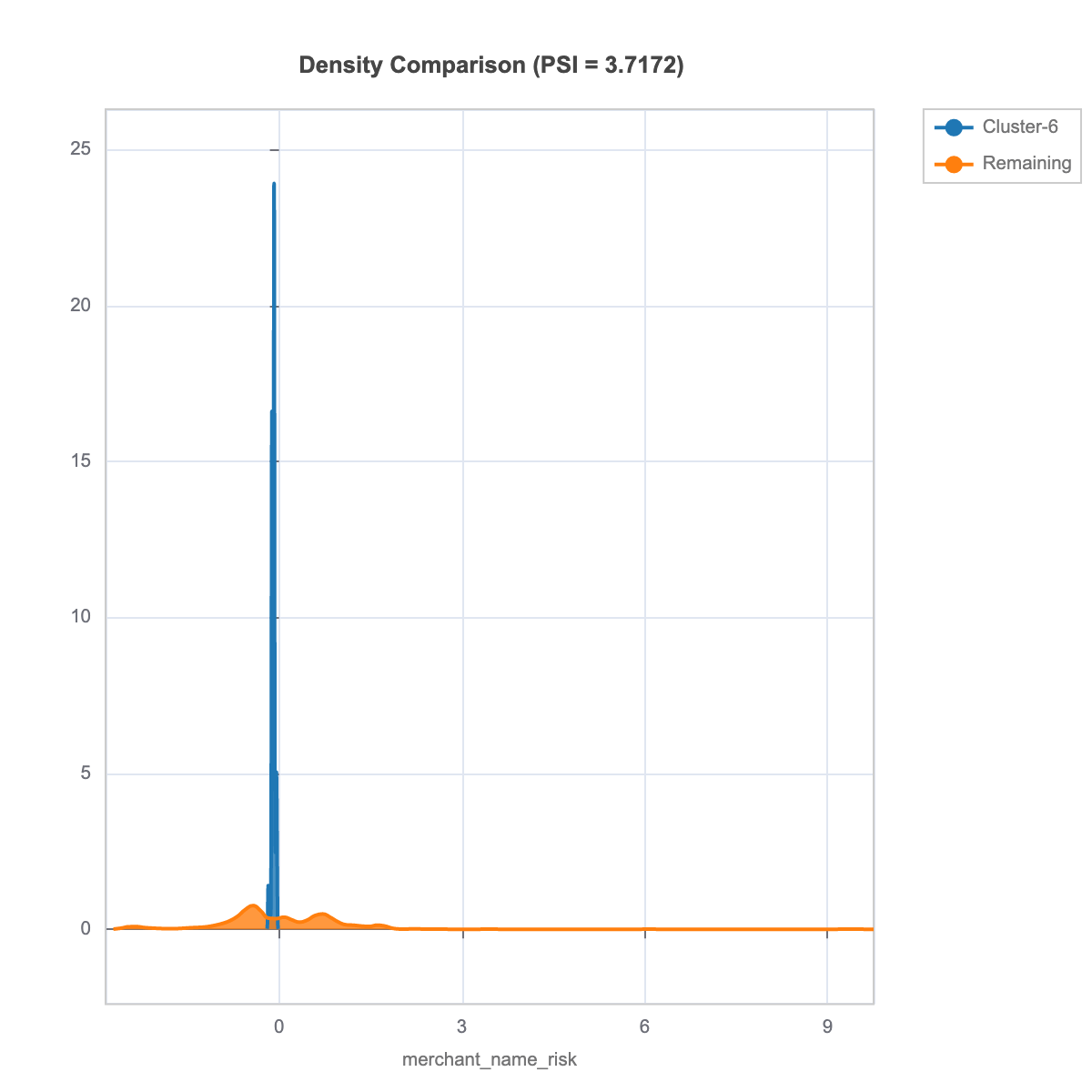}
        \caption{Feature marginal distribution distance of worst cluster to the rest of the clusters with respect to AUCs.}
        \label{fig:Density_auc}
    \end{minipage}\hfill
    \begin{minipage}{0.48\textwidth}
        \centering
        \includegraphics[width=\textwidth]{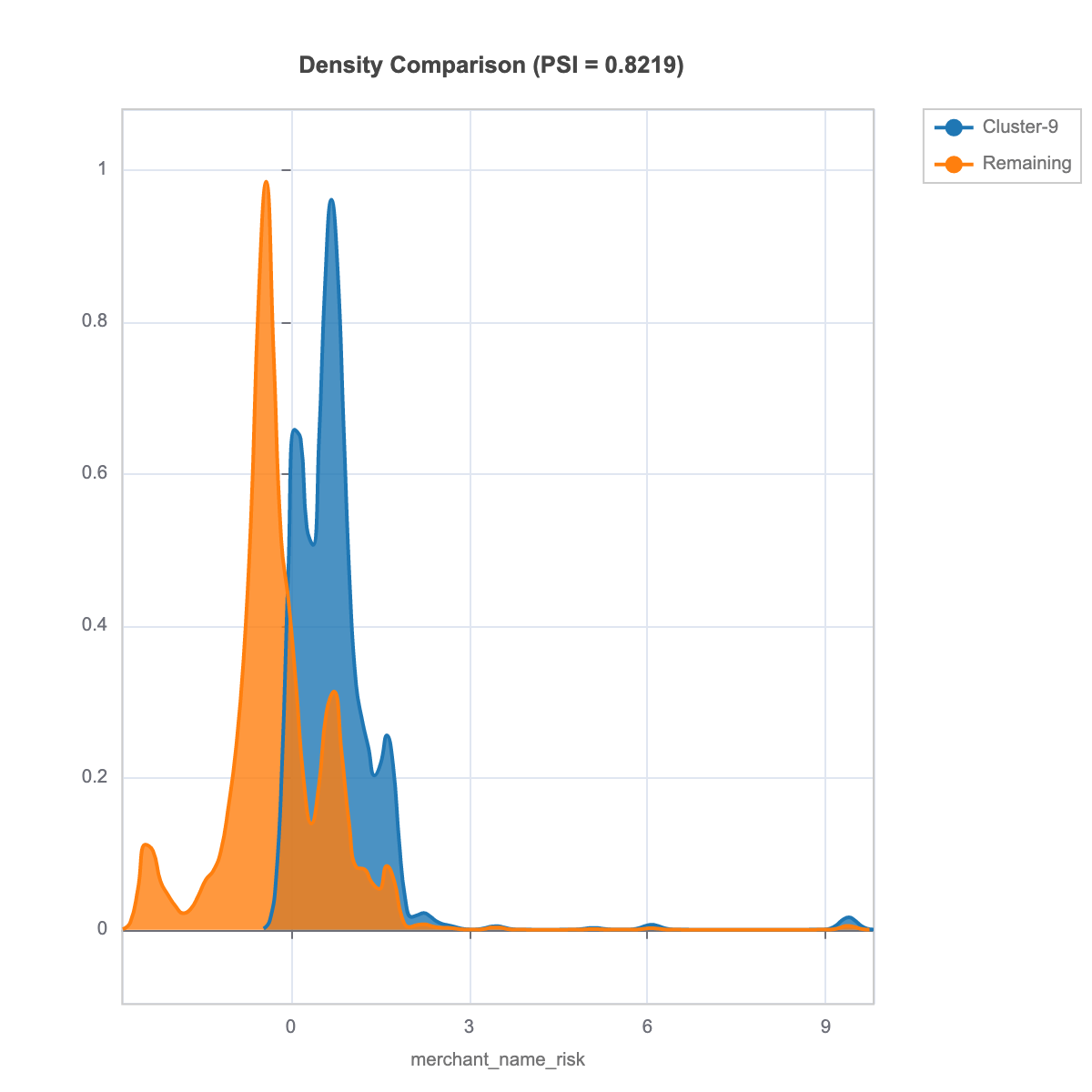}
        \caption{Feature marginal distribution distance of worst cluster to the rest of the clusters with respect to Brier scores}
        \label{fig:Density_brier}
    \end{minipage}
    \end{figure}

\paragraph{Explaining the worst cluster (Figures \ref{fig:PSI_auc} and \ref{fig:PSI_brier}).}
Figures \ref{fig:PSI_auc} and \ref{fig:PSI_brier} compare
feature distributions of the \emph{worst} AUC and Brier score clusters against the rest of the data. The largest Jensen-Shannon (J-S) distances (a smoothed and symmetrized version of the Kullback-Leibler distance), also known as the Population Stability Index (PSI) \cite{yurdakul2007analyzing}, primarily arise from \texttt{merchant\_name\_ risk}. Other features with large PSIs include \texttt{mcc\_risk}, \texttt{card\_present}, and \texttt{utilization}. Large PSIs suggest that fraud patterns for these segments differ from the rest of the clusters, as there is an observed change or difference in the characteristics among the compared populations (i.e., clusters in this case); additionally, their fraud rates are also higher (Figures \ref{fig:Density_Fraud_AUC} and \ref{fig:Density_Fraud_Brier}). This diagnostic directs model developers for model improvement and model users to monitor important features where their drift will impact performance significantly.

    \begin{figure}[htbp]
    \centering
    \begin{minipage}{0.48\textwidth}
        \centering
        \includegraphics[width=\textwidth]{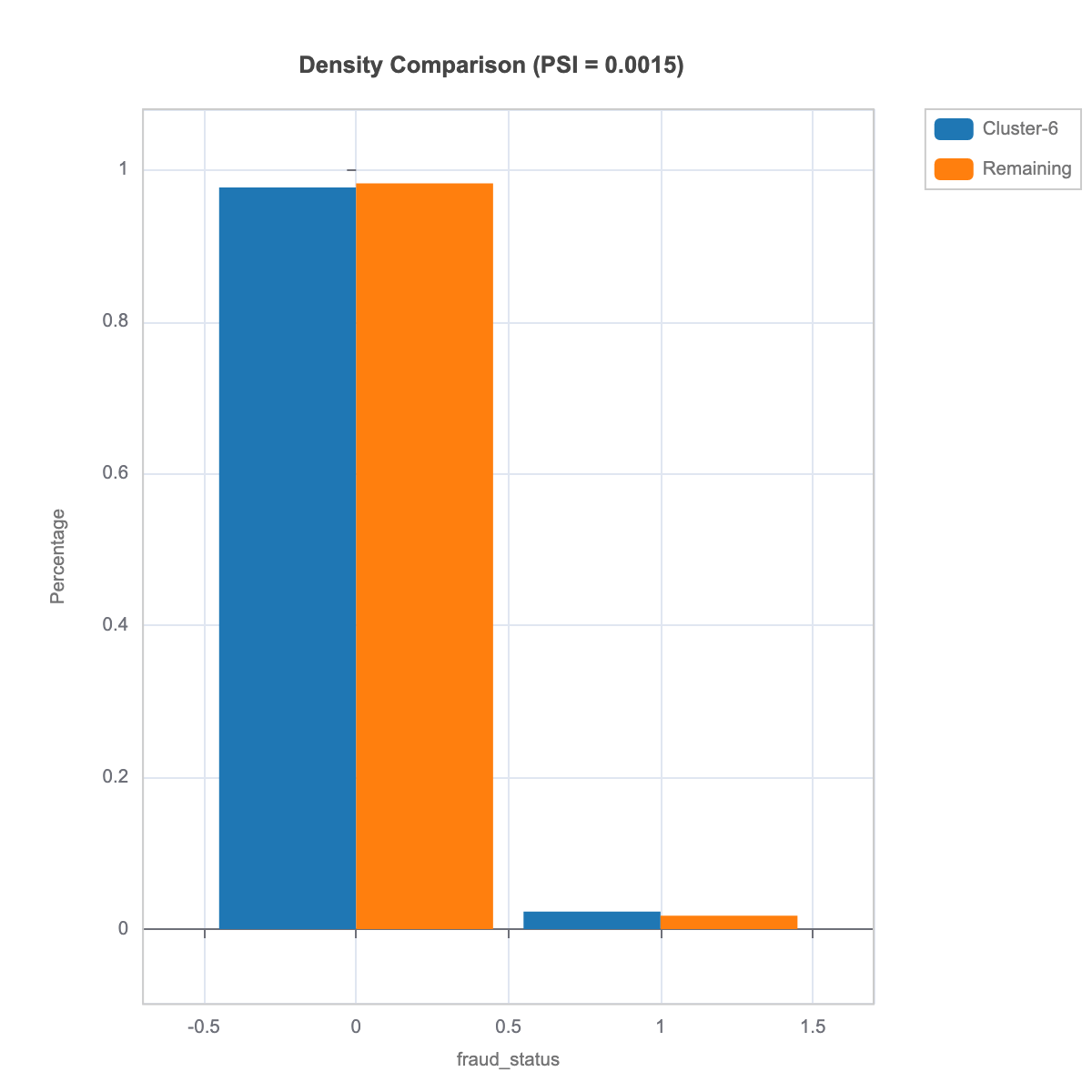}
        \caption{Fraud rate difference worst AUC cluster to the rest of the clusters.}
        \label{fig:Density_Fraud_AUC}
    \end{minipage}\hfill
    \begin{minipage}{0.48\textwidth}
        \centering
        \includegraphics[width=\textwidth]{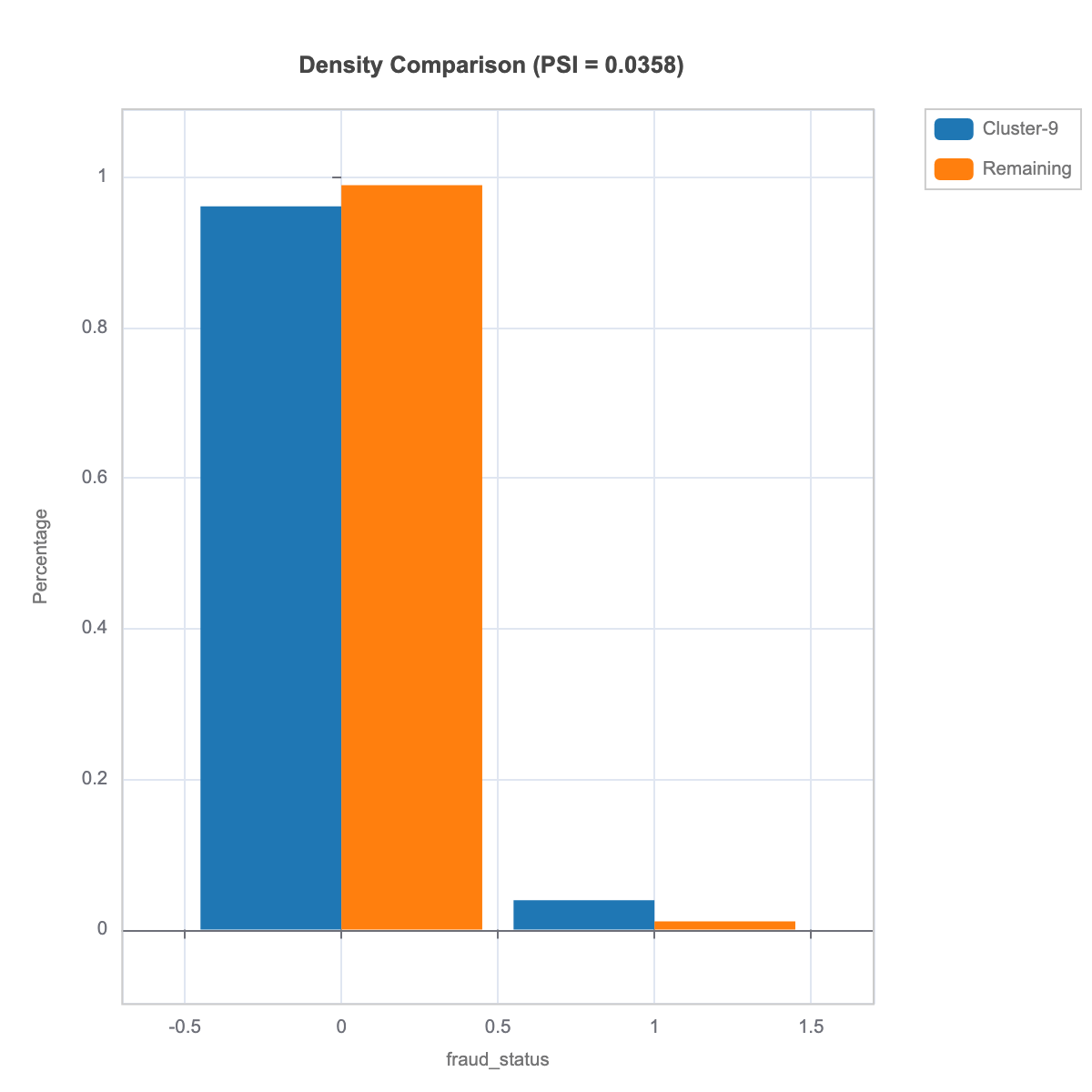}
        \caption{Fraud rate difference worst Brier score cluster to the rest of the clusters.}
        \label{fig:Density_Fraud_Brier}
    \end{minipage}
    \end{figure}

\paragraph{Evaluation of inter-cluster performance (Figures \ref{fig:Intercluster_AUC}.}
Figures \ref{fig:Intercluster_AUC} visualizes the inter-cluster AUCs as a heatmap, which provides insights to how the model ranks observations across clusters. Low values across the heatmap indicate areas of inconsistent calibration, further supporting the findings and insights the Brier scores previously provided. Recall cluster 9 had the highest Brier score performance, suggesting poor calibration, and is supported by the inter-cluster AUCs shown in \ref{fig:Intercluster_AUC}.

    \begin{figure}[htbp]
    \centering
    \begin{minipage}{0.98\textwidth}
        \centering
        \includegraphics[width=\textwidth]{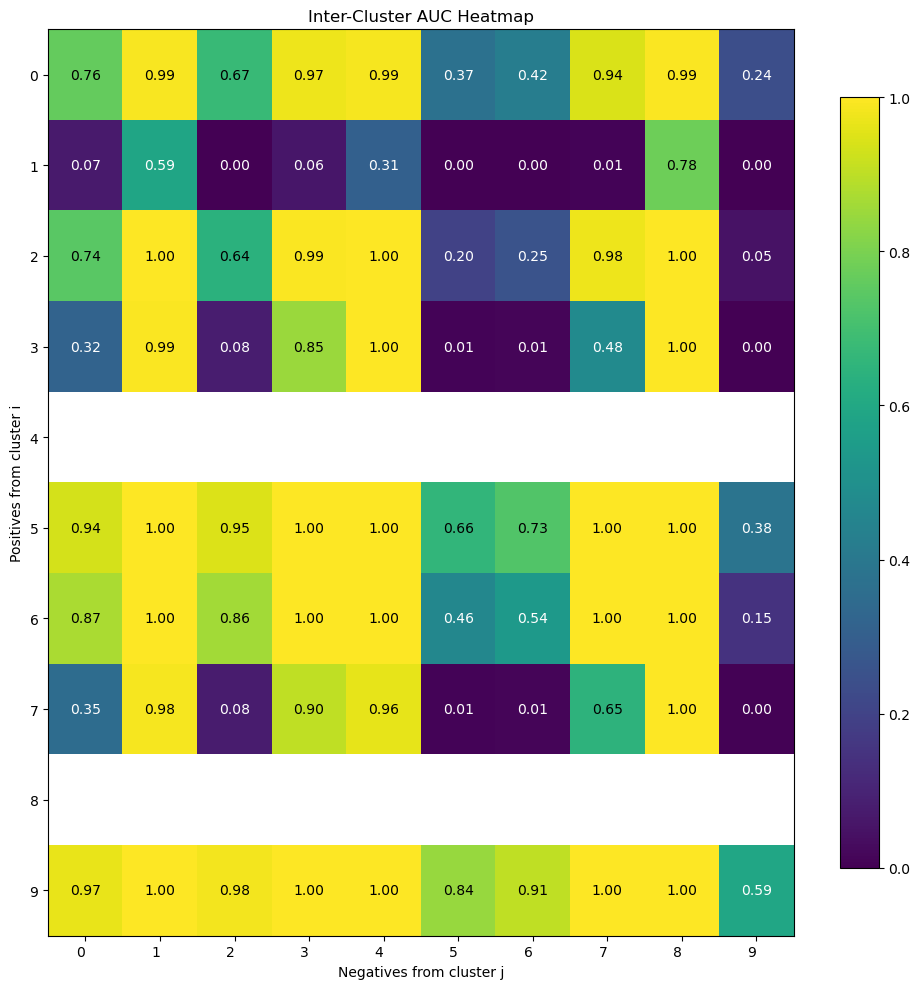}
        \caption{Heatmap of inter-cluster AUC.}
        \label{fig:Intercluster_AUC}
    \end{minipage}\hfill
    \end{figure}

\section{Conclusion}

Traditional one-number metrics mask the nuanced ways in which
classification models can fail. By formally decomposing global AUC into intra- and inter-cluster terms, we provide a mathematically sound, computationally light, and visually intuitive tool for localizing model weaknesses. When paired with additive metrics such as the Brier score and log loss, the method separates \emph{where} a model ranks poorly from \emph{where} it is miscalibrated—information that is indispensable for model risk managers, auditors, and fairness officers.

Our two case studies demonstrate the practical value of the approach. In the Taiwan Credit dataset, cluster diagnostics revealed borrowers whose scores were well-calibrated yet poorly ranked. In the CatBoost fraud detection model, the decomposition exposed small but financially critical merchant segments because they have a higher fraud rate than the rest of the population. This finding provides insight for model improvement and implementation of important features for drift monitoring during model production.

We believe the proposed framework advances the state of
model validation by making local performance transparent, actionable, and auditable, key pillars of trustworthy model in high-stakes applications.

\newpage
\bibliographystyle{plain}
\bibliography{auc_decomposition}

\end{document}